\newtheorem{assumption}{\textbf{Assumption}}
\newtheorem{theorem}{\textbf{Theorem}}[section]
\newtheorem{lemma}[theorem]{\textbf{Lemma}}
\newtheorem{remark}[theorem]{\textbf{Remark}}
\newtheorem{proposition}[theorem]{\textbf{Proposition}}
\newtheorem{definition}[theorem]{\textbf{Definition}}
\newcommand{\argmax}{\mathop{\operatorname{argmax}}}
\long\def\symbolfootnote[#1]#2{\begingroup%
\def\thefootnote{\fnsymbol{footnote}}\footnote[#1]{#2}\endgroup}
\newcommand{\EE}{\mathcal{E}}
\newcommand{\GG}{\mathcal{G}}
\newcommand{\LL}{\mathcal{L}}
\newcommand{\nbrs}{\mathcal{N}}
\newcommand{\until}[1]{\{1,\ldots,#1\}} 
\newcommand{\subj}{\text{subj. to}}
\newcommand\oprocendsymbol{\hbox{$\square$}}
\newcommand\oprocend{\relax\ifmmode\else\unskip\hfill\fi\oprocendsymbol}
\def\eqoprocend{\tag*{$\square$}}
\newcommand{\real}{{\mathbb{R}}}
\newcommand{\1}{\mathbf{1}}
\newcommand{\map}[3]{#1: #2 \rightarrow #3}
\newcommand{\sx}[1]{x^{#1}}
\newcommand{\smu}[1]{\mu^{#1}}
\newcommand{\slambda}[1]{{\lambda^{#1}}}
\newcommand{\srho}[1]{\rho^{#1}}
\newcommand{\slotIndex}{s}
\newcommand{\slotUB}{S}
\def \Dminmax/{Primal Min-Max Dual Subgradient}
\renewcommand{\inf}{\operatornamewithlimits{inf\vphantom{p}}}
\renewcommand{\lim}{\operatornamewithlimits{lim\vphantom{p}}}
\newcommand{\StatexIndent}[1][3]{%
  \setlength\@tempdima{\algorithmicindent}%
  \Statex\hskip\dimexpr#1\@tempdima\relax}
\begin{document}

\title{
  A duality-based approach for distributed min-max optimization\\
  with application to demand side management
  }

\author{Ivano Notarnicola$^1$, Mauro Franceschelli$^2$, Giuseppe Notarstefano$^1$
\thanks{The research leading to these results has received funding from the
  European Research Council (ERC) under the European Union's
  Horizon 2020 research and innovation programme (grant agreement No 638992 -
  OPT4SMART)
  and from the 
  Italian grant SIR ``Scientific Independence of young Researchers'', project
  CoNetDomeSys, code RBSI14OF6H, funded by the Italian Ministry of Research and
  Education (MIUR).
  }
\thanks{ $^1$Ivano Notarnicola and Giuseppe Notarstefano are with the
  Department of Engineering, Universit\`a del Salento, Via Monteroni, 73100
  Lecce, Italy, \texttt{name.lastname@unisalento.it.}  } 
\thanks{$^2$Mauro Franceschelli (corresponding author) is with the Department of
  Electrical and Electronic Engineering, University of Cagliari, Piazza D'Armi,
  09123 Cagliari, Italy, \texttt{mauro.franceschelli@diee.unica.it.} }
% \thanks{$^*$Mauro Franceschelli is the corresponding author.} %
  }

\maketitle

\begin{abstract}
  In this paper we consider a distributed optimization scenario in which a set
  of processors aims at minimizing the maximum of a collection of ``separable
  convex functions'' subject to local constraints. This set-up is motivated by
  peak-demand minimization problems in smart grids. Here, the goal is to
  minimize the peak value over a finite horizon with: (i) the demand at each time
  instant being the sum of contributions from different devices, and (ii) the
  local states at different time instants being coupled through local
  dynamics. The min-max structure and the double coupling (through the devices
  and over the time horizon) makes this problem challenging in a distributed
  set-up (e.g., well-known distributed dual decomposition approaches cannot be
  applied). We propose a distributed algorithm based on the combination of
  duality methods and properties from min-max optimization. Specifically, we
  derive a series of equivalent problems by introducing ad-hoc slack variables
  and by going back and forth from primal and dual formulations. On the
  resulting problem we apply a dual subgradient method, which turns out to be a
  distributed algorithm. We prove the correctness of the proposed algorithm and
  show its effectiveness via numerical computations.
\end{abstract}

% In this paper we consider the problem of computing, with a distributed
% algorithm, the optimal solution to a convex optimization problem whose
% objective function is the maximum of the sum of a set of convex functions,
% each one representing a local objectives of agents in a multi-agent
% system. The considered objective function is not separable and standard
% distributed approaches can not be applied. The considered optimization problem
% generalizes the specific problem of coordinating a large set of electric
% thermal systems such as water heaters or freezers for electric demand side
% management purposes with an entirely distributed, private and cooperative
% approach which does not involve supervisors or information aggregators.

% \bigskip\bigskip In this paper we consider a constrained optimization problem
% in which we want to minimize the maximum of a set of convex functions subject
% to a set of coupling constraints.  Using a sequence of equivalent problem
% formulation obtained with suitable slack variables and duality tools, we are
% able to propose a fully distributed algorithm which solves the original
% min-max problem exhibiting the same behavior of the subgradient method.  We
% motivate the proposed scenario with a concrete problem set-up arising in the
% electric demand side management of electric thermal systems such as water
% heaters or freezers.  We show the effectiveness of the proposed strategy via
% numerical simulations.

\section{Introduction}
\label{sec:intro}
% A major challenge our society is facing is the need to reduce costs and
% pollution due to electric energy production. A significant part of these costs
% are related to the daily variation of the demand, since energy producers must
% employ peaking power plants which are generally less efficient than base-load
% power plants.
%
The addition of processing, measurement, communication and control
capability to the electric power grid is leading to smart grids, in which smart
generators, accumulators and loads can cooperate to execute Demand Side
Management (DSM) programs \cite{alizadeh2012demand}. The goal is to reduce the
hourly and daily variations and peaks of electric demand by optimizing
generation, storage and consumption.
%
% One way to address this challenge is to add flexibility to the hourly electric
% demand by executing so-called demand side management (DSM) programs, which aim
% at reducing the hourly and daily variations and peaks of electric demand by
% remotely controlling electric loads.
%
% These ideas are being pushed forward by the drive toward the so-called smart
% grid, i.e., the addition of modern processing, measurement, communication and
% control capability to the electric power grid.
%
% We point to the work by Alizadeh \emph{et al.} in \cite{alizadeh2012demand} for
% a description of how the modern tools developed by the scientific community in
% the context of information processing can be exploited for demand-side
% management in the smart grid.
%
A widely adopted objective in DSM programs is Peak-to-Average Ratio (PAR),
defined as the ratio between peak-daily and average-daily power demands. PAR
minimization gives raise to a min-max optimization problem if the average daily
electric load is assumed not to be affected by the demand response strategy.

% Recently there has been a surge of interest in distributed and decentralized
% solutions to DSM problems, i.e., a coordinator collects aggregate information
% about users' and electric devices needs and provides real-time prices or
% suggested behaviors while each user or smart-home autonomously schedule its own
% consumption based on the information provided by the coordinator.
% %
% In the research project CoNetDomeSys$^a$ it is being proposed to provide device
% level autonomous cooperation capability to achieve the objectives of DSM while
% not exploiting any form of centralized coordination, aggregate information on
% electric loads or real-time utility prices, some preliminary results are
% presented in~\cite{franceschelli2016coordination}.

% In the case of large populations of TCLs, the need to abide the internal
% temperature constraints does not allow to reduce appreciably the average power
% consumption but only to modulate it around an average value.
In \cite{mohsenian2010autonomous} the authors % Mohsenian-Rad \emph{et al.}
propose a game-theoretic model for PAR minimization and provide a distributed
energy-cost-based strategy for the users.
% which is proven to be optimal.
%
A noncooperative-game approach is also proposed in \cite{atzeni2013demand},
where optimal strategies are characterized and a distributed scheme is designed
based on a proximal decomposition algorithm.
%
%DSM
% In \cite{atzeni2013demand} the authors
% formulate a DSM problem in a smart grid as a noncooperative game and analyze the
% existence of optimal strategies. The authors present also a distributed
% algorithm to be run on the users' smart meters, which provides the optimal
% production and/or storage strategies, while preserving the privacy of the users
% and minimizing the required signaling with the central unit.
%
% The approach proposed our paper differs from approaches such as the one in
% \cite{atzeni2013demand} in that we are interested in cooperative solution to
% distributed optimization problems in the context of DSM, the optimal solution of
% a non-cooperative game is in general sub-optimal when cooperation is considered.
%
A key difference of the set-up in~\cite{mohsenian2010autonomous,atzeni2013demand}, 
compared to the one proposed in our paper, is that in those works each agent needs 
to know the total load and tariffs in the power distribution system. Moreover, the agents 
do not cooperate to compute the strategy.
In~\cite{parisio2014model} a Model Predictive Control scheme is proposed to
optimize micro-grid operations while satisfying a time-varying request and
operation constraints using a mixed-integer linear model.
%. The problem is modeled as a mixed-integer linear program.
% \MF{Other approaches for optimization-based management of smart grids involve
%   model predictive control. In~\cite{parisio2014model} the scenario of a network
%   of interconnected Microgrids which comprise generation capacities, storage
%   devices, and controllable loads, operating as a single controllable system
%   have been investigated. The authors propose an approach based on model
%   predictive control to the problem of efficiently optimizing microgrid
%   operations while satisfying a time-varying request and operation constraints
%   which can be formulated using mixed-integer linear programming
%   (MILP). Experimental results have been obtained as well.

In this paper we propose a novel distributed optimization framework for min-max
optimization problems commonly found in DSM problems.
Differently from the references above, we consider a cooperative, distributed
computation model in which the agents in the network do not have knowledge of
aggregate quantities, communicate only with neighboring agents and perform local
computations (with no central coordinator) to solve the optimization problem.
% despite the restrictive assumptions of limited locally exploitable information
% commonly adopted in the framework of multi-agent systems.

% STANDARD DUAL DECOMPOSITION FOR DISTRIBUTED OPTIMIZATION
Duality is a widely used tool for distributed optimization algorithms as shown,
e.g., in the tutorials~\cite{palomar2006tutorial,yang2010distributed}. 
% dual decomposition methods have been applied
% in order to develop distributed algorithms in a pure peer-to-peer set-up.
% In~\cite{yang2010distributed} and references therein a tutorial on network
% optimization via dual decomposition can be found.
%
% These standard approaches have been widely used in distributed optimization,
% but it
%
These standard approaches do not apply to the framework considered in this
paper.
%
% to optimization problems with coupling in
% the cost function and constraint set, as for the framework considered in this
% paper.
%
% problem set-up with coupling constraints
%\IN{In~\cite{chang2014distributed} distributed optimization methods are studied for
%solving locally constrained problems with a coupled global cost function.
%A consensus-based distributed primal-dual perturbation (PDP) algorithm is proposed
%in which agents use a consensus technique to estimate the global cost and
%constraint functions.}
%
%
In~\cite{chang2014distributed} a distributed consensus-based primal-dual
algorithm is proposed to solve optimization problems with coupled global cost
function and inequality constraints.

%
% Nedic Distributed constrained optimization by consensus-based primal-dual
% perturbation method \cite{chang2014distributed}
%
% Motivated by emerging applications in smart grid and distributed sparse
% regression, this paper studies distributed optimization methods for solving
% general problems which have a coupled global cost function and have inequality
% constraints. We consider a network scenario where each agent has no global
% knowledge and can access only its local mapping and constraint functions. To
% solve this problem in a distributed manner, we propose a consensus-based
% distributed primal-dual perturbation (PDP) algorithm. In the algorithm, agents
% employ the average consensus technique to estimate the global cost and
% constraint functions via exchanging messages with neighbors, and meanwhile use
% a local primal-dual perturbed subgradient method to approach a global optimum.

% MINIMAX
Min-max optimization is strictly related to saddle-point
problems. In~\cite{nedic2009subgradient} the authors propose a subgradient
method to generate approximate saddle-points.
A min-max problem is also considered in~\cite{srivastava2011distributed} and a
distributed algorithm based on a suitable penalty approach has been proposed. 
% The authors consider a time-varying communication graph and proved the almost
% sure convergence of the algorithm.
%Differently from our set-up, in \cite{srivastava2011distributed} each term of
%the max-function is local (and known by an agent).
%
Another class of algorithms exploits the exchange of active constraints among
the network nodes to solve constrained optimization problems which include
min-max problems,
\cite{notarstefano2011distributed,burger2014polyhedral}. Although they work
under asynchronous, directed communication they do not scale in set-ups as the
one in this paper in which the terms of the max function are coupled.
Very recently, in~\cite{mateos2015distributed} the authors proposed a
distributed projected subgradient method to solve constrained saddle-point
problems with agreement constraints. 
% The proposed algorithm is based on saddle-point dynamics with Laplacian averaging.
%
Although our problem set-up fits in those considered in~\cite{mateos2015distributed}, 
our algorithmic approach and the analysis are different.
In~\cite{simonetto2012regularized,koppel2015regret} saddle point dynamics are
used to design distributed algorithms for standard separable optimization
problems.

% CONTRIBUTION
% The main contributions of this paper are as follows. 
The contribution of this paper is twofold.  First, we propose a novel
distributed optimization framework which is strongly motivated by peak
power-demand minimization in DSM. The optimization problem has a min-max
structure with local constraints at each node. Each term in the max function
represents a daily cost (so that the maximum over a given horizon needs to be
minimized), while the local constraints are due to the local dynamics and input
bounds of the subsystems in the smart grid.
The problem is challenging when approached in a distributed way since it is
\emph{doubly coupled} (each term of the max function is coupled among the
agents, while the local constraints impose a coupling between different
``days'' in the time-horizon).

% Each term of the max function is coupled among the
% agents, since it is the sum of local functions each one known by the local agent
% only. Moreover, the local constraints impose a coupling between different
% ``days'' in the time-horizon.
%
%The goal is to solve the problem in a distributed way according to the computation model introduced above.
%
% Here each agent knows only its local constraint and its local objective function
% at each day.
% Also, the agents perform only local computations (no coordinator is
% present) and exchange information only with neighbors.

Second, as main paper contribution, we propose a distributed algorithm to solve
this class of min-max optimization problems. 
The algorithm has a very simple and clean structure in which a primal
minimization and a dual ascent step are performed. The primal problem has a
similar structure to the centralized one.
Despite this simple structure, which resembles standard distributed dual
methods, the algorithm is not a standard decomposition scheme and the derivation
of the algorithm is non-obvious.
Specifically, the algorithm is derived by heavily resorting to duality theory
and properties of min-max optimization (or saddle-point) problems. In
particular, a sequence of equivalent problems is derived in order to decompose
the originally coupled problem into locally-coupled subproblems, and
thus being able to design a distributed algorithm.
An interesting feature of the algorithm is its expression in terms of dual
variables of two different problems and of the original primal variables. Since
we apply duality more than once and on different problems, this property,
although apparently intuitive, was not obvious a priori.
%
%Another appealing feature is that each node only computes the decision variable
%of interest. Thus, problems which are both large-scale (many agents are
%present) and big-data (a large horizon is considered) can be solved.
%
Another appealing feature of the algorithm is that every limit point of the primal
sequence at each node is a (feasible) optimal solution of the original
optimization problem (although this is only convex and not strictly convex).
This property is obtained by the minimizing sequence of the local primal
subproblems without resorting to averaging schemes,
\cite{nedic2009approximate}. Finally, since each node only computes the decision
variable of interest, our algorithm can solve both large-scale (many agents are
present) and big-data (a large horizon is considered) problems.

% discuss differences with ... \cite{chang2014distributed}, \cite{nedic2009subgradient}
% \bigskip \bigskip \bigskip \bigskip \bigskip
% \paragraph*{Notation}
% We denote $\sx{i}\in\real^m$ the state of node $i$, while with $\sx{i}_\slotIndex$ denotes its $\slotIndex$-th component.
% We denote by $\1$ the vector $[1, \ldots, 1]^\top \in \real^m$.
%
%\IN{ADD: while intuitively it seems natural to construct the dual ... there are a lot of technical aspects 
%to be taken into account ... convergence and feasibility of solutions, regularities ...}
%
%\IN{
%The cost is not strictly convex... 
%}

The paper is structured as follows. In Section~\ref{sec:preliminaries} we
provide some useful preliminaries on optimization, duality theory and
subgradient methods. In Section~\ref{sec:distributed_algorithm} we formalize our
distributed min-max optimization set-up and present the main contribution of the
paper, a novel, duality based distributed optimization method. In
Section~\ref{sec:analysis} we characterize its convergence properties. Finally,
in Section \ref{sec:simulations} we corroborate the theoretical results with a
numerical example involving peak power minimization in a smart-grid scenario.

%Due to space constrains all proofs are omitted in this paper and 
%can be found in~\cite{notarnicola2016duality}.

Due to space constrains all proofs are omitted in this paper and 
will be provided in a forthcoming document.

\section{Preliminaries}
\label{sec:preliminaries}

\subsection{Optimization and Duality}
Consider a constrained optimization problem, addressed as primal problem,
having the form
\begin{align}
\begin{split}
  \min_{ z \in Z } \:& \: f(z)
  \\
  \subj \: & \: g(z) \preceq 0
%   \min_{ z \in Z } \: f(z) \hspace{0.8cm} \subj \: g(z) \preceq 0
\end{split}
\label{eq:preliminaries_primal}
\end{align}
where $Z \subseteq \real^N$ is a convex and compact set,
$\map{f}{\real^N}{\real}$ is a convex function and each component $\map{g_\slotIndex}{\real^N}{\real}$,
$\slotIndex \in \until{\slotUB}$, of $g$ is a convex function.

The following optimization problem
\begin{align}
\begin{split}
  \max_\mu \:& \: q(\mu)
  \\
  \subj \: & \: \mu \succeq 0
\end{split}
\label{eq:preliminaries_dual}
\end{align}
is called the dual of problem~\eqref{eq:preliminaries_primal}, where
$\map{q}{\real^\slotUB}{\real}$ is obtained by minimizing with respect to $z \in Z$
the Lagrangian function $\LL (z,\mu) := f(z) + \mu^\top g(z)$, i.e.,
$q(\mu) = \min_{z \in Z} \LL(z,\mu)$. Problem~\eqref{eq:preliminaries_dual} is
well posed since the domain of $q$ is convex and $q$ is concave on its domain.
% \begin{align}
%   \LL (x,\mu) := f(x) + \mu^\top g(x).
%   \label{eq:preliminaries_Lagrangian}
% \end{align}

It can be shown that the following inequality holds
\begin{align}
  \inf_{z \in Z } \sup_{\mu \succeq 0} \LL(z, \mu) \ge \sup_{\mu\succeq 0} \inf_{z \in X } \LL(z,\mu),
  \label{eq:preliminaries_weak_duality}
\end{align}
which is called weak duality.
When in~\eqref{eq:preliminaries_weak_duality} the equality holds, then we say
that strong duality holds and, thus, solving the primal
problem~\eqref{eq:preliminaries_primal} is equivalent to solving its dual
formulation~\eqref{eq:preliminaries_dual}. In this case the right-hand-side
problem in~\eqref{eq:preliminaries_weak_duality} is referred to as
\emph{saddle-point problem} of \eqref{eq:preliminaries_primal}.

\begin{definition}
  A pair $(z^\star , \mu^\star)$ is called an primal-dual optimal solution of
  problem~\eqref{eq:preliminaries_primal} if $z^\star\in Z$ and
  $\mu^\star\succeq 0$, and $(z^\star , \mu^\star)$ is a saddle point of the
  Lagrangian, i.e.,
\begin{align*}
  \LL (z^\star,\mu ) \le \LL (z^\star,\mu^\star) \le \LL (z,\mu^\star)
\end{align*}
for all $z\in Z$ and $\mu\succeq 0$.
\oprocend
\label{def:primal_dual_pair}
\end{definition}

% Note that for an optimal pair, $x^\star$ can be computed as $\argmin_{x\in X} \LL (x,\mu^\star)$
% and analogously $\mu^\star  = \argmax_{\mu \succeq 0} \LL(x^\star,\mu)$.

A more general min-max property can be stated. Let $Z \subseteq \real^N$ and
$W \subseteq \real^\slotUB$ be nonempty convex sets.  Let
$\phi : Z \times W \to \real$, then the following inequality
\begin{align*}
  \inf_{z\in Z} \sup_{w \in W}  \phi (z,w) \ge \sup_{w \in W} \inf_{z\in Z} \phi (z,w)
\end{align*}
holds true and is called the \emph{max-min} inequality. When the equality holds, then
we say that $\phi$, $Z$ and $W$ satisfy the \emph{strong max-min} property or the
\emph{saddle-point} property.

The following theorem gives a sufficient condition for the strong max-min property
to hold.

\begin{proposition}[{\cite[Propositions~4.3]{bertsekas2009min}}]
  Let $\phi$ be such that (i) $\phi (\cdot,w) : Z  \to \real$ is convex and closed
  for each $w \in W$, and (ii) $-\phi(z,\cdot) : W \to \real$ is convex and closed for
  each $z \in Z$.
  Assume further that $W$ and $Z$ are convex and compact sets. Then
  \begin{align*}
    \sup_{w \in W} \inf_{z \in Z} \phi (z,w) = \inf_{z \in Z} \sup_{w \in W}  \phi (z,w)
  \end{align*}
  and the set of saddle points is nonempty and compact.~\oprocend
  \label{prop:saddle_point}
\end{proposition}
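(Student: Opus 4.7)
The plan is to prove the two claims (strong max-min equality and nonemptiness/compactness of the saddle-point set) via the two value functions
$\overline{\phi}(z) := \sup_{w \in W}\phi(z,w)$ on $Z$ and $\underline{\phi}(w) := \inf_{z \in Z}\phi(z,w)$ on $W$. First, I would observe that under hypothesis (i), $\overline{\phi}$ is the pointwise supremum of a family of closed convex functions, hence convex and lower semi-continuous on $Z$; symmetrically, hypothesis (ii) makes $\underline{\phi}$ concave and upper semi-continuous on $W$. Since $Z$ and $W$ are convex and compact, $\overline{\phi}$ attains its minimum at some $z^\star \in Z$ and $\underline{\phi}$ attains its maximum at some $w^\star \in W$, and the weak max-min inequality immediately delivers $\sup_{w\in W}\underline{\phi}(w) \le \inf_{z\in Z}\overline{\phi}(z)$.

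The second and main step is to establish the reverse inequality. My plan is a finite-intersection / KKM-style argument. Fix $\varepsilon>0$ and set $\beta := \sup_{w\in W}\underline{\phi}(w)$; for each $w\in W$, define
\begin{align*}
  C_w \;:=\; \bigl\{z \in Z : \phi(z,w) \le \beta + \varepsilon \bigr\}.
\end{align*}
Each $C_w$ is closed and convex by hypothesis (i). The crux is to show that the collection $\{C_w\}_{w\in W}$ enjoys the finite intersection property: for any finite subfamily $\{w_1,\dots,w_k\}\subset W$, I would consider the auxiliary function $(z,\lambda)\mapsto \sum_{i=1}^k \lambda_i\phi(z,w_i)$ on $Z\times\Delta_k$, where $\Delta_k$ is the simplex, and use the concavity-usc of $\phi(z,\cdot)$ (hypothesis (ii)) together with the definition of $\beta$ to produce a $z\in\bigcap_i C_{w_i}$. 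Compactness of $Z$ then yields a common $z_\varepsilon \in \bigcap_{w\in W}C_w$, i.e. $\overline{\phi}(z_\varepsilon)\le \beta+\varepsilon$, and letting $\varepsilon\downarrow 0$ closes the gap.

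The third step is comparatively routine: once equality of the two values holds, I would characterize the saddle-point set as $\argmin_{Z}\overline{\phi} \times \argmax_{W}\underline{\phi}$. Each factor is nonempty by Step~1, and closed as a level set of an lsc (resp.\ usc) function; being a closed subset of a compact set, each factor is compact, hence so is their product. I expect the genuine obstacle to be the finite intersection argument in Step~2, since it is precisely where one must exploit the two-sided convex/concave structure rather than one side alone. If that route proves cumbersome, a natural fallback is a Tikhonov-type perturbation $\phi_\delta(z,w) := \phi(z,w) + \delta\|z\|^2 - \delta\|w\|^2$, which is strictly convex-concave so that a saddle point exists by an elementary argument, followed by taking $\delta\downarrow 0$ and extracting a convergent subsequence through compactness of $Z\times W$.
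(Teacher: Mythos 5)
The paper offers no proof of this proposition at all: it is imported verbatim from the cited reference (Bertsekas, Prop.~4.3), where it is obtained from the min common/max crossing (perturbation-function) machinery. Your argument is therefore necessarily a different route, and it is a legitimate classical one: it is essentially the Kneser--Fan/Sion proof, working with the value functions $\overline{\phi}$ and $\underline{\phi}$, getting attainment and weak duality from semicontinuity plus compactness, and closing the duality gap through a finite-intersection-property argument on the sublevel sets $C_w$. Steps 1 and 3 are correct as written (in Step 1, note that $\overline{\phi}$ and $\underline{\phi}$ are finite because the inner $\sup$ and $\inf$ are attained on the compact sets $W$ and $Z$ by semicontinuity). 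What this approach buys over the cited source is self-containedness: it uses only separation of convex sets and compactness, with no recourse to the perturbation/closedness framework or to fixed-point theorems.

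The one genuine soft spot is the finite-intersection step, which as sketched is close to circular: to ``produce a $z\in\bigcap_{i=1}^k C_{w_i}$'' from the auxiliary function $(z,\lambda)\mapsto\sum_{i=1}^k\lambda_i\phi(z,w_i)$ you must interchange $\min_z$ and $\max_{\lambda\in\Delta_k}$, which is itself a minimax theorem. The standard way to avoid the circularity is a direct separation argument: if $\bigcap_{i=1}^k C_{w_i}=\emptyset$, then the convex set $A:=\{u\in\real^k:\exists\, z\in Z \text{ with } \phi(z,w_i)\le u_i,\ i=1,\dots,k\}$ (convex by hypothesis (i) and convexity of $Z$) is disjoint from the convex set $\{u\in\real^k: u_i\le\beta+\varepsilon,\ i=1,\dots,k\}$; separating them and using the recession directions of the two sets yields $\lambda\in\Delta_k$ with $\sum_{i=1}^k\lambda_i\phi(z,w_i)\ge\beta+\varepsilon$ for all $z\in Z$. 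Concavity of $\phi(z,\cdot)$ then gives $\phi(z,\bar w)\ge\beta+\varepsilon$ for all $z\in Z$ with $\bar w:=\sum_{i=1}^k\lambda_i w_i\in W$, i.e.\ $\underline{\phi}(\bar w)\ge\beta+\varepsilon>\beta$, contradicting the definition of $\beta$. With that patch the main route is complete. By contrast, I would not lean on the fallback as stated: existence of a saddle point for the strictly convex--concave perturbation $\phi_\delta$ is not ``elementary'' --- it still requires Brouwer or Kakutani applied to the (now single-valued, continuous) best-response maps --- so the perturbation detour saves nothing over the finite-intersection argument.
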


%%%%%%%%%%%%%%%%%%%%%%%%%%%%%%%%%%%%%%%%%%%%%%%%%%%%%%%%%%%%%%%%%%%%%%%%%%%%%%%%

\subsection{Subgradient Method}
\label{sec:subgradient_method}
Consider the following (constrained) optimization problem
\begin{align}
  \min_{z\in Z} f (z)
  \label{eq:preliminaries_convex_problem}
\end{align}
with $Z \subseteq \real^N$ a closed convex set and $\map{f}{\real^N}{\real}$ convex. The
(projected) subgradient method is the iterative algorithm
% given by
\begin{align}
  z(t+1) = P_{Z} \Big( z(t) - \gamma(t) \widetilde{\nabla} f( z(t) ) \Big)
\label{eq:preliminaries_subgradient_iter}
\end{align}
where $t\in \mathbb{N}$ denotes the iteration index, $\gamma(t)$ is the
step-size, $\widetilde{\nabla} f( z(t) )$ denotes a subgradient of $f$ at
$z(t)$, and $P_Z(\cdot)$ is the Euclidean projection onto $Z$.

% \medskip
\smallskip

\begin{assumption}
\label{ass:step-size}
  The step-size $\gamma(t) \ge 0$ satisfies the following diminishing condition
  \begin{align}\notag
    \lim_{t\to \infty} \gamma(t) = 0, \:
    \sum_{t=1}^{\infty} \gamma(t) = \infty, \:
    \sum_{t=1}^{\infty} \gamma(t)^2 < \infty. \eqoprocend
  \end{align}
  % \oprocend
\end{assumption}

% \medskip
\smallskip

\begin{proposition}[{\cite[Proposition 3.2.6]{bertsekas2015convex}}]
  Assume that the subgradients $\widetilde{\nabla} f(z)$ are bounded for
  all $z \in Z$ and the set of optimal solutions is nonempty. Let the step-size
  $\gamma(t)\ge 0$ satisfy the diminishing condition in Assumption~\ref{ass:step-size}.
  % \begin{align}\notag
  %   \lim_{t\to \infty} \gamma(t) = 0, \:
  %   \sum_{t=1}^{\infty} \gamma(t) = \infty, \:
  %   \sum_{t=1}^{\infty} \gamma(t)^2 < \infty,
  % \end{align}
  Then the subgradient method in~\eqref{eq:preliminaries_subgradient_iter}
  applied to problem~\eqref{eq:preliminaries_convex_problem}
  converges in objective value and sequence $z(t)$ converges to an optimal
  solution.~\oprocend
  \label{prop:subgradient_convergence}
\end{proposition}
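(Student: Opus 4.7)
The plan is to adapt the classical Fej\'er-monotonicity argument for the projected subgradient method. Fix any optimal $z^\star$, let $f^\star = f(z^\star)$, and let $G$ be a uniform upper bound on $\|\widetilde{\nabla}f(z)\|$ over $z \in Z$ (which exists by hypothesis).

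First, I would exploit non-expansiveness of the Euclidean projection: since $z^\star \in Z$, we have $P_Z(z^\star) = z^\star$, so
\[ \|z(t+1) - z^\star\|^2 \le \|z(t) - z^\star - \gamma(t)\widetilde{\nabla}f(z(t))\|^2. \]
Expanding the square and applying the subgradient inequality $\widetilde{\nabla}f(z(t))^\top(z(t) - z^\star) \ge f(z(t)) - f^\star$ yields the key recursion
\[ \|z(t+1) - z^\star\|^2 \le \|z(t) - z^\star\|^2 - 2\gamma(t)\bigl(f(z(t)) - f^\star\bigr) + \gamma(t)^2 G^2. \]
Second, for objective-value convergence I would telescope this inequality over $t = 0,\ldots,T-1$. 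Since $f(z(t)) \ge f^\star$, this gives
\[ 2\sum_{t=0}^{T-1}\gamma(t)\bigl(f(z(t)) - f^\star\bigr) \le \|z(0)-z^\star\|^2 + G^2 \sum_{t=0}^{T-1}\gamma(t)^2. \]
Letting $T\to\infty$ and using $\sum_t \gamma(t)^2 < \infty$ together with $\sum_t \gamma(t) = \infty$ forces $\liminf_{t\to\infty} f(z(t)) = f^\star$. A short refinement, based on the Lipschitz-type bound $|f(z(t+1)) - f(z(t))| \le G\|z(t+1)-z(t)\| \le \gamma(t)G^2$ and $\gamma(t) \to 0$, upgrades $\liminf$ to $\lim$: otherwise one would find infinitely many times where $f(z(t))$ crosses a strip of fixed width $\epsilon$, but the cumulative length of such crossings is lower-bounded by a constant multiple of $\sum \gamma(t) = \infty$, contradicting the telescoped bound.

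Third, for convergence of the sequence itself I would reinterpret the key recursion as a quasi-Fej\'er inequality: because $f(z(t)) \ge f^\star$,
\[ \|z(t+1) - z^\star\|^2 \le \|z(t) - z^\star\|^2 + \gamma(t)^2 G^2, \]
and summability of $\gamma(t)^2$ then implies that $\|z(t) - z^\star\|^2$ converges for every optimal $z^\star$. In particular $\{z(t)\}$ is bounded, hence has a limit point $\bar{z}$; by continuity of $f$ on $Z$ and the already-proved $f(z(t))\to f^\star$, the point $\bar{z}$ is optimal. Now applying the quasi-Fej\'er property with $z^\star = \bar{z}$, we know that $\|z(t) - \bar{z}\|^2$ has a limit; since one subsequence tends to $0$, the entire sequence does, and $z(t)\to\bar{z}$.

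The main obstacle is the final passage from subsequential convergence to convergence of the whole sequence: one must pick the reference point $z^\star$ carefully as an actual limit point of the iterates to synchronize the Fej\'er monotonicity with a vanishing subsequence. A secondary subtlety is the $\liminf$-to-$\lim$ upgrade, which needs the Lipschitz-type control on consecutive iterates coming from the subgradient bound and the non-expansiveness of the projection.
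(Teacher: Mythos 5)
Your argument is correct and complete: the non-expansiveness/subgradient-inequality recursion, the telescoping for $\liminf f(z(t))=f^\star$, and the quasi-Fej\'er step anchored at an actual limit point $\bar z$ of the iterates are exactly the standard route to this result. The paper itself offers no proof --- it quotes this as Proposition 3.2.6 of Bertsekas --- and your proposal reproduces that textbook argument faithfully, including the two genuinely delicate points you flag (choosing $z^\star=\bar z$ to convert subsequential into full convergence, and the $G$-Lipschitz control $|f(z(t+1))-f(z(t))|\le\gamma(t)G^2$ used to upgrade the $\liminf$ to a limit).
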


%%%%%%%%%%%%%%%%%%%%%%%%%%%%%%%%%%%%%%%%%%%%%%%%%%%%%%%%%%%%%%%%%%%%%%%%%%%%%%%%

\section{Problem Set-up and Distributed\\Optimization Algorithm}
\label{sec:distributed_algorithm}
In this section we set-up the distributed min-max optimization problem and
propose a distributed algorithm to solve it.

\subsection{Distributed min-max optimization set-up}
\label{sec:setup}
We consider a network of $N$ processors which communicate according to a
\emph{connected, undirected} graph $\GG = (\until{N}, \EE)$, where
$\EE\subseteq \until{N} \times \until{N}$ is the set of edges. That is, the edge
$(i,j)$ models the fact that node $i$ and $j$ exchange information.  We denote
by $\nbrs_i$ the set of \emph{neighbors} of node $i$ in the fixed graph $\GG$,
i.e., $\nbrs_i := \left\{j \in \until{N} \mid (i,j) \in \EE \right\}$.

Motivated by applications in Demand Side Management of Smart Grids, we introduce
a min-max optimization problem to be solved by the network processors in a
distributed way.
Specifically, we associate to each processor $i$ a decision
vector $\sx{i} = [ \sx{i}_1, \ldots, \sx{i}_\slotUB ]^\top \in \real^\slotUB$, a constraint
set $X_i\subseteq \real^\slotUB$ and local cost functions $g_{i\slotIndex}$,
$\slotIndex\in\until{\slotUB}$, and set-up the following optimization problem
\begin{align}
\begin{split}
  \min_{\sx{1}, \ldots, \sx{N}} \: & \: \max_{\slotIndex \in \until{\slotUB}} \sum_{i=1}^N g_{i \slotIndex}(\sx{i}_\slotIndex)
  \\
  \subj \: & \: \sx{i} \in X_i , \quad i\in\until{N}
\end{split}
\label{eq:minimax_starting_problem}
\end{align}
where for each $i\in\until{N}$ the set $X_i\subseteq\real^\slotUB$ is nonempty,
convex and compact, and the functions $g_{i \slotIndex} : \real \to \real $,
$\slotIndex\in\until{\slotUB}$, are convex.

Note that we use the superscript $i\in\until{N}$ to indicate that a vector
$\sx{i}\in\real^\slotUB$ belongs to node $i$, while we use the subscript to identify a
vector component, i.e., $\sx{i}_\slotIndex$, $\slotIndex\in\until{\slotUB}$, is the $\slotIndex$-th
component of $\sx{i}$.

% \begin{assumption}
%   The sets $X_i\subseteq\real^\slotUB$, $i\in\until{N}$, are nonempty convex sets.
%   \oprocend
% \label{ass:constraint_qualification}
% \end{assumption}

% \begin{assumption}
%   The functions $g_{i\slotIndex}$, $i\in\until{N}$ and $\slotIndex\in\until{\slotUB}$, are convex. \oprocend
% \label{ass:convexity}
% \end{assumption}

Using a standard approach for min-max problems, we introduce an auxiliary
variable $P$ to write the so called epigraph representation of
problem~\eqref{eq:minimax_starting_problem}, given by
\begin{align}
\begin{split}
  \min_{\sx{1}, \ldots, \sx{N}, P} \: & \: P
  \\
  \subj \: & \: \sx{i} \in X_i , \hspace{1.5cm} i\in\until{N}
  \\
  & \: \sum_{i=1}^N g_{i \slotIndex}(\sx{i}_\slotIndex) \le P, \hspace{0.3cm} \slotIndex \in \until{\slotUB}.
\end{split}
\label{eq:starting_problem}
\end{align}

% \GN{GN: ADD comment on problem structure in distributed set-up (not common).}

% We point out that optimization problem~\eqref{eq:starting_problem} has 
% a structure which is slightly different from the one commonly treated in distributed
% optimization. In fact, problem~\eqref{eq:starting_problem} has not a strictly convex
% and separable cost, therefore standard distributed dual decomposition approaches 
% cannot be applied as they are.
%
% Moreover, since the problem is (only) convex it has not a unique solution, and it is
% well known that this fact impacts on the dual approach, see e.g., \cite{nedic2009approximate} 
% % \cite{larsson1999ergodic}
% and reference therein. 

Notice that, this problem is convex, but not strictly convex. This means that it
is not guaranteed to have a unique solution. This impacts on dual approaches
when trying to recover a primal optimal solution, see e.g.,
\cite{nedic2009approximate}
% \cite{larsson1999ergodic}
and references therein.

% A standard dual decomposition scheme would require knowledge of the full
% network state, and standard distributed methods would require at least a
% shared memory among the agents to keep track of the updated values of the
% state vectors $\sx{i}$ for all $i\in \until{N}$.

%%%%%%%%%%%%%%%%%%%%%%%%%%%%%%%%%%%%%%%%%%%%%%%%%%%%%%%%%%%%%%%%%%%%%%%%%%%%%%

\subsection{Algorithm description}
\label{sec:alg_description}
Next, we introduce our distributed optimization algorithm.
Informally, the algorithm consists of a two-step procedure. First, each node
$i\in\until{N}$ stores a set of variables $((\sx{i}$, $\srho{i}), \smu{i})$
obtained as the primal-dual optimal solution pair of a local min-max optimization problem
with a structure similar to the centralized problem. The coupling with the other
nodes in the original formulation is replaced by a term depending on neighboring
variables $\slambda{ij}$, $j\in\nbrs_i$. These variables are updated in the
second step according to a suitable linear law weighting the difference of
neighboring $\smu{i}$.
Nodes use a diminishing step-size denoted by $\gamma(t)$ and can initialize the
variables $\slambda{ij}$, $j\in\nbrs_i$ to zero.
In the next table we formally state our \Dminmax/ distributed algorithm from the
perspective of node $i$.
\begin{algorithm}
\renewcommand{\thealgorithm}{}
\floatname{algorithm}{Distributed Algorithm}

  \begin{algorithmic}[0]
    \Statex \textbf{Processor states}: $(\sx{i}, \srho{i})$, $\smu{i}$ and $\slambda{ij}$ for $j\in\nbrs_i$

    \Statex \textbf{Evolution}:

      \StatexIndent[0.5] \textbf{Gather} $ \slambda{ji}(t)$ from $j\in\nbrs_i$

      \StatexIndent[0.5] \textbf{Compute} $\big((\sx{i}(t+1),\srho{i}(t+1)),\smu{i}(t+1)\big)$ as a primal-dual 
      optimal solution pair of
      \begin{align}
      \begin{split}
        \min_{\sx{i}, \srho{i}} \: & \: \srho{i}
        \\
        \subj \: & \: \sx{i} \in X_i
        \\
        & \: g_{i \slotIndex}(\sx{i}_\slotIndex ) +  \sum_{j\in\nbrs_i}  \big( \slambda{ij}(t) - \slambda{ji}(t) \big)_{\slotIndex} \le \srho{i},
        \\
        & \hspace{4.0cm}\slotIndex\in\until{\slotUB}
      \end{split}
      \label{eq:alg_minimization}
      \end{align}

      \StatexIndent[0.5] \textbf{Gather} $ \smu{j}(t+1)$ from $j\in\nbrs_i$

      \StatexIndent[0.5] \textbf{Update} for all $j\in\nbrs_i$
      \begin{align}
        \slambda{ij} (t \!+\! 1) = \slambda{ij}(t) - \gamma(t) ( \smu{i}(t \!+\! 1) \!-\! \smu{j}(t \!+\! 1))
      \label{eq:alg_update}
      \end{align}

  \end{algorithmic}
  \caption{\Dminmax/}
  \label{alg:distributed_DSM}
\end{algorithm}

The structure of the algorithm and the meaning of the updates will be clear in
the constructive analysis carried out in the next section.
At this point we want to point out that although
problem~\eqref{eq:alg_minimization} has the same min-max structure of
problem~\eqref{eq:starting_problem}, $\rho^i$ is not a copy of the centralized
cost $P$, but rather a local contribution to that cost. That is, as we will see,
the total cost $P$ will be the sum of the $\rho^i$s.

%%%%%%%%%%%%%%%%%%%%%%%%%%%%%%%%%%%%%%%%%%%%%%%%%%%%%%%%%%%%%%%%%%%%%%%%%%%%%%

\section{Algorithm Analysis}
\label{sec:analysis}
The analysis of the proposed \Dminmax/ distributed algorithm is 
constructive and heavily relies on duality theory tools.

We start by deriving the equivalent dual problem of~\eqref{eq:starting_problem}
which is formally stated in the next lemma.

\begin{lemma}
	The optimization problem
	\begin{align}
	\label{eq:dual_centr}
	\begin{split}
	  \max_{ \smu{} \in\real^\slotUB } \: & \: \sum_{i=1}^N q_i( \smu{} )
	  \\
	  \subj \: & \: \1^\top \smu{} = 1, \: \smu{} \succeq 0
	\end{split}
	\end{align}
	where $\1 := [1, \ldots, 1]^\top \in \real^\slotUB$ and
	\begin{align}
	  q_i( \smu{} ) & := \min_{\sx{i} \in X_i} \sum_{\slotIndex=1}^\slotUB \smu{}_\slotIndex g_{i \slotIndex}(\sx{i}_\slotIndex), 
	  \hspace{0.5cm} i\in\until{N},
	\label{eq:qi_definition}
	\end{align}
	is the dual of problem~\eqref{eq:starting_problem} and strong duality holds.\oprocend
\label{lem:equivalence_dual_and_initial}
\end{lemma}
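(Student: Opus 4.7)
The plan is to obtain \eqref{eq:dual_centr} by directly forming the Lagrangian of \eqref{eq:starting_problem}, dualizing only the $S$ coupling inequalities, and then establishing strong duality via a Slater-type argument exploiting compactness of the $X_i$.

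First I would introduce multipliers $\mu_s \ge 0$ for the $S$ coupling constraints $\sum_{i=1}^N g_{is}(x^i_s) - P \le 0$, while keeping the abstract constraints $x^i \in X_i$ in the minimization domain. This gives the Lagrangian
\[
\LL(x^1,\ldots,x^N,P,\mu) \,=\, P\bigl(1 - \1^\top \mu\bigr) + \sum_{i=1}^N \sum_{s=1}^S \mu_s\, g_{is}(x^i_s).
\]
The key observation is that $P \in \real$ is unconstrained, so partial minimization over $P$ gives $-\infty$ whenever $\1^\top\mu \neq 1$. Therefore the dual function is finite only on the simplex $\{\mu \succeq 0 : \1^\top\mu = 1\}$, which produces the equality constraint in \eqref{eq:dual_centr}.

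On that simplex, the remaining term $\sum_{i=1}^N \sum_{s=1}^S \mu_s g_{is}(x^i_s)$ is separable across the blocks $x^i$, and since the constraints $x^i \in X_i$ are themselves decoupled, the partial minimization splits as $\sum_{i=1}^N \min_{x^i \in X_i} \sum_{s=1}^S \mu_s g_{is}(x^i_s) = \sum_{i=1}^N q_i(\mu)$, with $q_i$ exactly as defined in \eqref{eq:qi_definition}. This identifies problem \eqref{eq:dual_centr} as the dual of \eqref{eq:starting_problem}.

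For strong duality, I would invoke Slater's condition. Since each $X_i$ is nonempty, convex and compact and each $g_{is}$ is convex (hence continuous), pick any $\bar{x}^i \in X_i$, set $M := \max_{s \in \until{\slotUB}} \sum_{i=1}^N g_{is}(\bar{x}^i_s)$, and choose any $\bar{P} > M$. Then $(\bar{x}^1,\ldots,\bar{x}^N,\bar{P})$ is strictly feasible for the convex inequality constraints, while the abstract constraints $\bar{x}^i \in X_i$ are enforced directly. Since these inequality constraints are convex, Slater's condition applies and strong duality holds. No step here is a real obstacle; the only subtlety is recognizing that the unconstrained variable $P$ is what enforces $\1^\top\mu=1$ in the dual, and that the remaining minimization decouples nicely across agents to produce the $\sum_i q_i(\mu)$ structure.
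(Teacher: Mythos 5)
Your proof is correct; the paper actually omits the proof of this lemma (all proofs except one are deferred to a forthcoming document), but your derivation---dualizing the $\slotUB$ epigraph constraints, observing that minimization over the free variable $P$ yields $-\infty$ unless $\1^\top\smu{}=1$ and thus forces the simplex constraint, and decoupling the remaining minimization across the sets $X_i$ to obtain $\sum_i q_i(\smu{})$---is precisely the standard argument that the stated form of $q_i$ in~\eqref{eq:qi_definition} and the constraint structure of~\eqref{eq:dual_centr} presuppose. Your Slater argument for strong duality is also sound: compactness of the $X_i$ and continuity of the convex $g_{i\slotIndex}$ give both a strictly feasible point $(\bar{x}^1,\ldots,\bar{x}^N,\bar{P})$ with $\bar{P}$ large enough and finiteness of the primal optimal value, which together are what the strong-duality theorem for convex programs with abstract constraint sets requires.
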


In order to make problem~\eqref{eq:dual_centr} amenable for a distributed
solution, we can rewrite it in an equivalent form.
To this end, we introduce copies of the common optimization variable $\smu{}$ and
coherence constraints having the sparsity of the connected graph $\GG$, obtaining
\begin{align}
\begin{split}
  \max_{ \smu{1}, \ldots, \smu{N} }
  \: & \: \sum_{i=1}^N q_i( \smu{i} )
  \\
  \subj \: & \: \1^\top \smu{i} = 1,  \: \smu{i} \succeq 0, \: i\in\until{N}
  \\ & \:
  \smu{i} = \smu{j}, \: \text{for all } (i,j) \in \EE.
\end{split}
\label{eq:problem_with_copies}
\end{align}

Notice that we have also duplicated the simplex constraint so that it becomes a
local constraint for each node.

To solve this problem we can use a dual decomposition approach by designing a
dual subgradient algorithm. This can be done since the constraints are convex
and the cost function concave.
A dual subgradient algorithm applied to problem~\eqref{eq:problem_with_copies}
would immediately result into a distributed algorithm if functions
$q_i$ were available in a closed form.

Intuition suggests that deriving the dual of a dual problem would somehow bring
back to a primal formulation. However, we want to stress that:
\begin{enumerate}
\item problem \eqref{eq:problem_with_copies} is dualized rather than problem \eqref{eq:dual_centr},
\item different constraints are dualized, namely the coherence
constraints rather than the simplex ones.
\end{enumerate}

We start deriving the dual subgradient algorithm by dualizing only the coherence
constraints. Thus, we write the partial Lagrangian
\begin{align}
\begin{split}
  \LL_2( \smu{1},\ldots,  & \smu{N}, \{ \slambda{ij} \}_{(i,j) \in \EE })
  \\
  & = \sum_{i=1}^N \Big( q_i( \smu{i} ) + \sum_{j\in\nbrs_i} \slambda{ij}^\top (\smu{i} - \smu{j} ) \Big)
\end{split}
\label{eq:lagrangian_definition}
\end{align}
where $\slambda{ij} \in \real^\slotUB$ for all $(i,j)\in \EE$ are Lagrange multipliers associated to
the constraints $\smu{i} - \smu{j} = 0$.
By exploiting the undirected nature and the connectivity
of communication graph $\GG$, after some algebraic manipulations, we get
\begin{align}
\begin{split}
  \LL_2( \smu{1}, \ldots, & \smu{N},  \{ \slambda{ij} \}_{(i,j) \in \EE })
  \\
  & = \sum_{i=1}^N \Big( q_i( \smu{i} ) + {\smu{i} }^\top \!\! \sum_{j\in\nbrs_i} (\slambda{ij} - \slambda{ji})  \Big),
\end{split}
\label{eq:lagrangian_rearrangement}
\end{align}
which is separable with respect to $\smu{i}$, $i\in\until{N}$.

The dual of problem~\eqref{eq:problem_with_copies} is thus
\begin{equation}
  \min_{ \{\slambda{ij}\}_{(i,j)\in\EE} } \eta(\{\slambda{ij}\}_{(i,j)\in\EE}) = \sum_{i=1}^N
  \eta_i\left(\{\slambda{ij},\slambda{ji}\}_{j\in\nbrs_i}\right),
\label{eq:dual_dual}
\end{equation}
where for all $i\in\until{N}$
\begin{align*}
  \eta_i (\{\slambda{ij},\slambda{ji}\}_{j\in\nbrs_i} )
  \! = \! \max_{ \1^\top \smu{i} = 1,\smu{i} \succeq 0 } \!  q_i( \smu{i} ) \! 
  +\! {\smu{i} }^\top\!\!\! \sum_{j\in\nbrs_i} (\slambda{ij} \! -\!  \slambda{ji}).
\end{align*}

In order to apply a subgradient method to problem~\eqref{eq:dual_dual}, we
recall, \cite[Section~6.1]{bertsekas1999nonlinear}, that
\begin{align}
\frac{\tilde \partial
   \eta ( \{\slambda{ij}\}_{(i,j)\in\EE} ) }{ \partial \slambda{ij} }
  = {\smu{i}}^\star - {\smu{j}}^\star,
\label{eq:eta_subgradient}
\end{align}
where $\frac{\tilde \partial \eta (\cdot)}{\partial \slambda{ij}}$ denotes the component
associated to the variable $\slambda{ij}$ of a subgradient of $\eta$, and
\begin{align*}
  {\smu{k}}^\star \in \argmax_{ \1^\top \smu{k} = 1,\smu{k} \succeq 0 } \bigg( q_k( \smu{k} ) + 
  {\smu{k}}^\top \sum_{h\in\nbrs_k} (\slambda{kh} - \slambda{hk}) \bigg),
\end{align*}
for $k=i,j$.
%
%Also, for the method to converge a diminishing step-size, satisfying
%Assumption~\ref{ass:step-size}, is needed.
%
The dual subgradient algorithm for problem~\eqref{eq:problem_with_copies} can be
summarized as follows, for each node $i\in\until{N}$:
\begin{itemize}
\item[(S1)]\label{item:s1} 
  receive $\slambda{ji}(t)$, $j \in \nbrs_i$ and
  compute a subgradient $\smu{i}(t+1)$ by solving
\begin{align}
  \begin{split}
    \max_{\smu{i} } \: & \: q_i( \smu{i} ) +
    {\smu{i}}^\top \!\! \sum_{j\in\nbrs_i} (\slambda{ij}(t) - \slambda{ji}(t))
    \\
    \subj \:   & \: \1^\top \smu{i} = 1,\smu{i} \succeq 0.
  \end{split}
\label{eq:dual_subgradient}
\end{align}
\item[(S2)]\label{item:s2}
  exchange with neighbors the updated $\smu{j}(t+1)$, $j \in \nbrs_i$, 
  and update $\slambda{ij}$, $j\in\nbrs_i$, via
  \begin{align*}
    \slambda{ij} (t \!+\! 1) = \slambda{ij}(t) - \gamma(t) (\smu{i}(t \!+\! 1) \!-\! \smu{j}(t + 1)).
  \end{align*}
  where $\gamma (t)$ denotes the step-size.
\end{itemize}

It is worth noting that in~\eqref{eq:dual_subgradient} the value of
$\slambda{ij}(t)$ and $\slambda{ji}(t)$, for $j\in\nbrs_i$, is fixed as
highlighted by the index $t$.  
Moreover, we want to stress, once again, that the algorithm is \emph{not}
implementable as it is written, since functions $q_i$ are not available in
closed form.
On this regard, here we slightly abuse notation since in (S1)-(S2) we use
$\smu{i}(t)$ as in the \Dminmax/ algorithm, but we have not proven
the equivalence yet. Since we will prove it in the next lemmas we preferred
not to overweight the notation.

% The next result establishes the \emph{formal} correctness of the proposed strategy.
\begin{lemma}
  The dual subgradient updates (S1)-(S2), with step-size $\gamma(t)$ 
  satisfying Assumption~\ref{ass:step-size}, generate
  sequences $\{ \slambda{ij}(t) \}$, $(i,j)\in \EE$ that converge in objective value to 
  $\eta^\star = q^\star = P^\star$, optimal costs of~\eqref{eq:dual_dual},~\eqref{eq:dual_centr} 
  and~\eqref{eq:starting_problem}, respectively.\oprocend
%  the optimal cost $\eta^\star$ of problem~\eqref{eq:dual_dual}.\oprocend
  \label{lem:dual_subgradient_correcteness}
\end{lemma}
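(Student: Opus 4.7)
My proof plan is to establish the chain of equalities $P^\star = q^\star = \eta^\star$ and then deduce convergence in value of the iterates (S1)-(S2) by a direct application of Proposition~\ref{prop:subgradient_convergence} to problem~\eqref{eq:dual_dual}. The equality $P^\star = q^\star$ is already granted by Lemma~\ref{lem:equivalence_dual_and_initial}, so the core work is to obtain $q^\star = \eta^\star$ and to verify the convergence hypotheses along the generated sequence.

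To obtain $q^\star = \eta^\star$, I would first argue that problems~\eqref{eq:dual_centr} and~\eqref{eq:problem_with_copies} share the same optimal value. Since $\GG$ is connected, the local coherence constraints $\smu{i}=\smu{j}$ for $(i,j)\in\EE$ propagate globally and force $\smu{1}=\cdots=\smu{N}$ at every feasible point of~\eqref{eq:problem_with_copies}, yielding a bijection with feasible points of~\eqref{eq:dual_centr} that preserves the objective $\sum_i q_i(\smu{})$. I would then invoke strong duality between~\eqref{eq:problem_with_copies} and its partial dual~\eqref{eq:dual_dual}: the primal is a concave maximization over a nonempty, convex, compact set, with purely linear equality constraints (the simplex identities $\1^\top\smu{i}=1$ and the coherence equalities), and the choice $\smu{i}=(1/\slotUB)\1$ for all $i$ is strictly feasible for the inequalities $\smu{i}\succeq 0$. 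Standard Slater-type conditions thus deliver both $\eta^\star = q^\star$ and the existence of a dual optimum $\{\slambda{ij\star}\}_{(i,j)\in\EE}$.

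With a nonempty dual optimal set at hand, I would apply Proposition~\ref{prop:subgradient_convergence} to~\eqref{eq:dual_dual}. The step-size $\gamma(t)$ satisfies Assumption~\ref{ass:step-size} by hypothesis. The subgradient components, given in closed form in~\eqref{eq:eta_subgradient} by $\smu{i\star}-\smu{j\star}$, lie in the difference of two unit simplices in $\real^\slotUB$ and therefore have Euclidean norm bounded by a universal constant, uniformly in $t$ and in the choice of argmax selection. The proposition then yields $\eta(\{\slambda{ij}(t)\})\to\eta^\star$, which by the chain above equals $q^\star$ and $P^\star$, establishing the claim.

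The main obstacle I anticipate is the careful bookkeeping needed to guarantee all hypotheses of Proposition~\ref{prop:subgradient_convergence} simultaneously. Attainment of a dual optimum in the unbounded multiplier space $\real^{|\EE|\slotUB}$ must be justified via a Slater-type argument rather than by compactness of the dual feasible set. Moreover, because each $q_i$ is the infimum of linear functions over the compact set $X_i$, the inner maximization in the definition of $\eta$ need not admit a unique argmax, so one has to argue that the right-hand side of~\eqref{eq:eta_subgradient} is a \emph{bona fide} element of the subdifferential of $\eta$ for any measurable selection $\smu{i\star}$, and that the uniform boundedness of subgradients survives such a selection; this is where most of the technical care of the proof will be concentrated.
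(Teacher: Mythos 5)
Your proposal is correct and follows exactly the route the paper sets up (the paper itself omits this proof for space, but the surrounding development makes the intended argument unambiguous): chain $P^\star=q^\star$ from Lemma~\ref{lem:equivalence_dual_and_initial} with $q^\star=\eta^\star$ via equivalence of~\eqref{eq:dual_centr} and~\eqref{eq:problem_with_copies} on a connected graph plus strong duality for the partial dualization of the affine coherence constraints, then apply Proposition~\ref{prop:subgradient_convergence} to~\eqref{eq:dual_dual} using the bounded subgradients~\eqref{eq:eta_subgradient}. Your added care about attainment of a dual optimum in the unbounded multiplier space and about the validity of~\eqref{eq:eta_subgradient} under nonunique argmax selections is exactly the bookkeeping the paper leaves implicit.
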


We can explicitly rephrase update~\eqref{eq:dual_subgradient} by plugging in the
definition of $q_i$, given in~\eqref{eq:qi_definition}, thus obtaining the
following max-min optimization problem
\begin{align}
  \hspace{-0.2cm}
  \max_{ \1^\top \smu{i} = 1,\smu{i} \succeq 0 } \! \bigg(\!
  \min_{\sx{i} \in X_i}
  \sum_{\slotIndex=1}^\slotUB \smu{i}_\slotIndex \Big(
  g_{i \slotIndex}( \sx{i}_\slotIndex) \! + \!\! \sum_{j\in\nbrs_i} \!\! (\slambda{ij}(t) \!-\! \slambda{ji}(t))_\slotIndex \! \Big) \!\! \bigg).
\label{eq:maxmin}
\end{align}
Notice that this is a local problem at each node $i$ once the value for
$\slambda{ij}(t)$ and $\slambda{ji}(t)$ for all $j\in\nbrs_i$ are given.

%  Equivalence between local primal problem and minmax
\begin{lemma}
  Max-min optimization problem~\eqref{eq:maxmin} is the saddle point problem
  associated to problem~\eqref{eq:alg_minimization}. Moreover, a primal-dual
  optimal solution pair of~\eqref{eq:alg_minimization}, call it $\{ ( \sx{i}(t+1) , \srho{i}(t+1) ), \smu{i}(t+1) \}$, 
  exists and $(\sx{i}(t+1),\smu{i}(t+1))$ is a solution of~\eqref{eq:maxmin}.
\label{lem:dual_minmax_equivalence}
\end{lemma}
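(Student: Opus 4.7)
The plan is to derive the saddle-point problem of~\eqref{eq:alg_minimization} explicitly and show it collapses to~\eqref{eq:maxmin}. First, I would dualize only the $\slotUB$ inequality constraints $g_{i\slotIndex}(\sx{i}_\slotIndex) + \sum_{j\in\nbrs_i}(\slambda{ij}(t) - \slambda{ji}(t))_\slotIndex - \srho{i} \le 0$ using multipliers $\smu{i} \in \real^\slotUB$, $\smu{i}\succeq 0$, leaving the constraint $\sx{i} \in X_i$ in the primal domain. The Lagrangian rearranges as
\begin{align*}
  \LL(\sx{i},\srho{i},\smu{i})
  = \srho{i} (1 - \1^\top \smu{i})
  + \sum_{\slotIndex=1}^\slotUB \smu{i}_\slotIndex \Big( g_{i\slotIndex}(\sx{i}_\slotIndex) + \sum_{j\in\nbrs_i}(\slambda{ij}(t) - \slambda{ji}(t))_\slotIndex \Big).
\end{align*}
Since $\srho{i}$ is unconstrained, the infimum over $\srho{i} \in \real$ is $-\infty$ unless $\1^\top \smu{i} = 1$, in which case the coefficient of $\srho{i}$ vanishes. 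Restricting to the simplex and taking the minimum over $\sx{i} \in X_i$ then produces precisely~\eqref{eq:maxmin}. This gives the first claim.

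For the second claim I would invoke Proposition~\ref{prop:saddle_point} on the reduced bilinear-in-$\smu{i}$ function
\begin{align*}
\phi(\sx{i},\smu{i}) := \sum_{\slotIndex=1}^\slotUB \smu{i}_\slotIndex \Big( g_{i\slotIndex}(\sx{i}_\slotIndex) + \sum_{j\in\nbrs_i}(\slambda{ij}(t) - \slambda{ji}(t))_\slotIndex \Big)
\end{align*}
over the sets $Z = X_i$ (compact and convex by assumption) and $W = \{\smu{i} \in \real^\slotUB : \1^\top \smu{i} = 1,\smu{i}\succeq 0\}$ (the unit simplex, compact and convex). For fixed $\smu{i} \in W$, $\phi(\cdot,\smu{i})$ is a nonnegative combination of convex functions, hence convex and closed on $X_i$; for fixed $\sx{i} \in X_i$, $\phi(\sx{i},\cdot)$ is linear, hence convex and closed on $W$. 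Proposition~\ref{prop:saddle_point} therefore yields existence of a saddle point $(\sx{i}(t+1),\smu{i}(t+1))$ and strong max-min equality.

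To conclude, I would lift this saddle point back to a primal-dual pair for~\eqref{eq:alg_minimization} by setting
\begin{align*}
\srho{i}(t+1) = \max_{\slotIndex \in \until{\slotUB}} \Big( g_{i\slotIndex}(\sx{i}(t+1)_\slotIndex) + \sum_{j\in\nbrs_i}(\slambda{ij}(t) - \slambda{ji}(t))_\slotIndex \Big),
\end{align*}
which makes $(\sx{i}(t+1),\srho{i}(t+1))$ primal feasible and, together with $\smu{i}(t+1) \in W$, satisfies the saddle-point inequalities of the full Lagrangian $\LL$ in Definition~\ref{def:primal_dual_pair}: the inner inequality holds by the saddle-point property of the reduced $\phi$, and the outer inequality holds because the coefficient of $\srho{i}$ in $\LL$ vanishes on $W$.

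The main obstacle is that $\srho{i}$ is unconstrained, so Proposition~\ref{prop:saddle_point} cannot be applied to the full Lagrangian directly (the primal domain $X_i \times \real$ is not compact). The resolution is the elementary but essential observation that $\srho{i}$ enters $\LL$ only linearly, so the $\srho{i}$-minimization can be performed in closed form, implicitly enforcing the simplex constraint on $\smu{i}$ and reducing the problem to a compact-compact saddle point to which Proposition~\ref{prop:saddle_point} applies cleanly. Beyond this, showing that any saddle point of the reduced $\phi$ lifts to a primal-dual pair of the original problem (with the explicit choice of $\srho{i}(t+1)$ above) is a routine verification of the two saddle-point inequalities.
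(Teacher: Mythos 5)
Your proof is correct, and it rests on the same two ingredients as the paper's: Proposition~\ref{prop:saddle_point} applied to the same bilinear-in-$\smu{i}$ function $\phi$ over $X_i$ times the unit simplex, and the LP duality that pairs the simplex maximization in $\smu{i}$ with the epigraph minimization in $\srho{i}$. The differences are the direction of the derivation and the way existence is settled. The paper starts from~\eqref{eq:maxmin}, swaps $\max$ and $\min$ via Proposition~\ref{prop:saddle_point}, and then dualizes the inner linear program over the simplex (with $\srho{i}$ as the multiplier of $\1^\top\smu{i}=1$) to manufacture~\eqref{eq:alg_minimization}; existence of the primal-dual pair is then obtained by noting that~\eqref{eq:alg_minimization} is convex and satisfies Slater's condition (always strictly feasible since $\srho{i}$ is free). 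You instead start from~\eqref{eq:alg_minimization}, dualize its $\slotUB$ inequality constraints, and let the unconstrained $\srho{i}$ enforce the simplex constraint, landing directly on~\eqref{eq:maxmin} as the saddle-point problem --- arguably the more natural direction for proving the literal statement. Your existence argument, via the compact--compact saddle point of the reduced $\phi$ followed by the explicit lift $\srho{i}(t+1)=\max_{\slotIndex}\big(g_{i\slotIndex}(\sx{i}(t+1)_\slotIndex)+\sum_{j\in\nbrs_i}(\slambda{ij}(t)-\slambda{ji}(t))_\slotIndex\big)$, is slightly more self-contained in that it delivers attainment of both the primal and dual optima at once, whereas Slater's condition by itself only guarantees dual attainment (primal attainment in the paper implicitly relies on compactness of $X_i$). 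One step worth spelling out in your ``routine verification'': the inequality $\LL(z^\star,\mu)\le\LL(z^\star,\smu{i}(t+1))$ for arbitrary $\mu\succeq 0$ (not only on the simplex) requires the complementary-slackness identity $\phi(\sx{i}(t+1),\smu{i}(t+1))=\srho{i}(t+1)$, which follows from the $\mu$-side saddle inequality of $\phi$ by testing against the vertices of the simplex; with that observation the lift goes through.
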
%
\begin{proof}
We give a constructive proof which clarifies how problem~\eqref{eq:alg_minimization}
is derived from~\eqref{eq:maxmin}.
Define
\begin{align}
  \phi(\sx{i},\smu{i}):=\sum_{\slotIndex=1}^\slotUB \smu{i}_\slotIndex \Big(
  g_{i \slotIndex}( \sx{i}_\slotIndex) \! + \!\! \sum_{j\in\nbrs_i} (\slambda{ij}(t) \!-\! \slambda{ji} (t) )_\slotIndex \Big)
\end{align}
and note that (i) $\phi(\cdot,\smu{i})$ is closed and convex for all
$\smu{i} \succeq 0$ and (ii) $\phi(\sx{i}, \cdot )$ is closed and concave (linear over the compact 
$\1^\top \smu{i} = 1$, $\smu{i}\succeq 0$), for
all $\sx{i}\in\real^\slotUB$.
Thus we can invoke the saddle point Proposition~\ref{prop:saddle_point} which allows us to switch the max 
and min operators, and write
\begin{align}
\begin{split}
  & \max_{ \1^\top \smu{i} = 1,\smu{i} \succeq 0 } \! \bigg(\!
  \min_{\sx{i} \in X_i}
  \sum_{\slotIndex=1}^\slotUB \smu{i}_\slotIndex \Big(
  g_{i \slotIndex}( \sx{i}_\slotIndex) \! + \!\!\! \sum_{j\in\nbrs_i} \!\! (\slambda{ij}(t) \!-\! \slambda{ji}(t) )_\slotIndex \! \Big) \!\! \! \bigg)
  \\
  & \!\!=\!\!\!
  \min_{\sx{i} \in X_i} \!\! \bigg(\!
  \max_{ \1^\top \smu{i} = 1, \smu{i} \succeq 0 }
  \sum_{\slotIndex=1}^\slotUB \smu{i}_\slotIndex \Big(
  g_{i \slotIndex}(\sx{i}_\slotIndex) \! + \!\!\! \sum_{j\in\nbrs_i} \!\! (\slambda{ij}(t) \!-\!  \slambda{ji}(t) )_\slotIndex \! \Big) \!\!\! \bigg)\!.
\end{split}
  \label{eq:minmax}
\end{align}

Since the inner maximization problem depends nonlinearly on $\sx{i}$
(which is itself an optimization variable), the solution cannot be obtained without
considering the optimization also on $\sx{i}$. We overcome this issue
% by rephrasing the maximization via its dual as follows.
by substituting the inner maximization problem with its equivalent dual.
Notice that the inner problem is a linear program when $\sx{i}$ are kept
fixed, and thus strong duality can be exploited.
Introducing a scalar multiplier $\srho{i}$ associated to the simplex constraint,
we have %that
\begin{align}
\begin{split}
  \max_{\smu{i} } \: & \:
     \sum_{\slotIndex=1}^\slotUB \smu{i}_\slotIndex \Big(
        g_{i \slotIndex}(\sx{i}_\slotIndex) + \sum_{j\in\nbrs_i} (\slambda{ij}(t) - \slambda{ji}(t) )_{\slotIndex}
      \Big)
  \\
  \subj \: & \: \1^\top \smu{i} = 1, \smu{i} \succeq 0
\end{split}
\label{eq:intermediate_problem}
\end{align}
is equivalent to its dual
\begin{align}
\label{eq:rho_formulation}
  \min_{\srho{i}} \: & \: \srho{i}
  \\
\notag
  \subj \: 
   & \: g_{i \slotIndex}(\sx{i}_\slotIndex ) \!+\!\! \sum_{j\in\nbrs_i} 
   \! (\slambda{ij}(t) \!-\! \slambda{ji}(t) )_{\slotIndex} \le \srho{i}, \slotIndex \!\in \! \until{\slotUB}
\end{align}
where the $\slotUB$ inequality constraints follow from the minimization of the
partial Lagrangian of~\eqref{eq:intermediate_problem} with respect to
$\smu{i} \succeq 0$.
Plugging formulation~\eqref{eq:rho_formulation} in place of the inner maximization
in~\eqref{eq:minmax}, we can write a \emph{joint} minimization, i.e.,
minimize simultaneously with respect to $\sx{i}$ and $\srho{i}$, which leads to~\eqref{eq:alg_minimization}. 

To prove the second part, notice that problem~\eqref{eq:alg_minimization} is convex.
Then, the problem satisfies the Slater's constraint qualification and, thus, strong duality
holds. Therefore, a primal-dual optimal solution pair
$(\sx{i}(t+1), \srho{i}(t+1),\smu{i}(t+1))$ exists and from the previous
arguments the proof follows.%
\end{proof}

We point out that the previous lemma shows that performing 
minimization in~\eqref{eq:alg_minimization} turns out to be equivalent to 
performing step~(S1).

We are now ready to state the main result of the paper, namely the convergence 
of the \Dminmax/ distributed algorithm.

\begin{theorem}
  Let $\{ (\sx{i}(t), \srho{i}(t))\}$, $i\in\until{N}$, be
  the sequence generated by the \Dminmax/ distributed algorithm,
  with $\gamma(t)$ satisfying Assumption~\ref{ass:step-size}.
  Then, the sequence $\{\sum_{i=1}^N \rho^i(t)\}$ converges to the optimal cost $P^\star$
  of~\eqref{eq:minimax_starting_problem} and
  every limit point of the sequence $\{ \sx{i}(t) \}$, $i\in\until{N}$, is an 
  optimal (feasible) solution of~\eqref{eq:minimax_starting_problem}.
\oprocend
\label{thm:convergence}
\end{theorem}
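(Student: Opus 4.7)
The plan is to combine the two preceding lemmas to bridge the ``dual of the dual'' convergence with the local primal iterates. First I would use Lemma~\ref{lem:dual_minmax_equivalence}: by strong duality of~\eqref{eq:alg_minimization} and by the saddle-point reformulation carried out in its proof, the optimal value $\rho^i(t+1)$ of the local primal problem coincides with the optimal value of the inner max-min~\eqref{eq:maxmin}, which is precisely $\eta_i(\{\slambda{ij}(t),\slambda{ji}(t)\}_{j\in\nbrs_i})$. Summing over $i\in\until{N}$ gives the key identity
\begin{align*}
  \sum_{i=1}^N \srho{i}(t+1) \;=\; \eta\bigl(\{\slambda{ij}(t)\}_{(i,j)\in\EE}\bigr).
\end{align*}
Combining this with Lemma~\ref{lem:dual_subgradient_correcteness}, the right-hand side converges to $\eta^\star = P^\star$, which proves the first claim.

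Next I would establish feasibility of every limit point of $\{\sx{i}(t)\}$. Since each $X_i$ is compact, by Bolzano--Weierstrass the sequence $\{(\sx{1}(t),\ldots,\sx{N}(t))\}$ admits convergent subsequences, and any limit point $(\barsx{1},\ldots,\barsx{N})$ satisfies $\barsx{i}\in X_i$ because $X_i$ is closed. To show the coupling constraint, I would sum the inequality constraints of~\eqref{eq:alg_minimization} over $i$ for each $\slotIndex\in\until{\slotUB}$:
\begin{align*}
  \sum_{i=1}^N g_{i\slotIndex}(\sx{i}_\slotIndex(t+1)) + \sum_{i=1}^N \sum_{j\in\nbrs_i}\bigl(\slambda{ij}(t)-\slambda{ji}(t)\bigr)_\slotIndex \le \sum_{i=1}^N \srho{i}(t+1).
\end{align*}
Exploiting the undirected nature of $\GG$, for every edge $(i,j)\in\EE$ the two terms $\slambda{ij}-\slambda{ji}$ and $\slambda{ji}-\slambda{ij}$ appear once each, so the double sum on the left telescopes to zero. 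Thus $\sum_{i=1}^N g_{i\slotIndex}(\sx{i}_\slotIndex(t+1))\le \sum_{i=1}^N \srho{i}(t+1)$ for every $\slotIndex$, and passing to the limit along a convergent subsequence (using continuity of each $g_{i\slotIndex}$) yields
\begin{align*}
  \sum_{i=1}^N g_{i\slotIndex}(\barsx{i}_\slotIndex) \;\le\; P^\star, \qquad \slotIndex\in\until{\slotUB}.
\end{align*}
Hence $(\barsx{1},\ldots,\barsx{N},P^\star)$ is feasible for the epigraph problem~\eqref{eq:starting_problem}, which in turn means $(\barsx{1},\ldots,\barsx{N})$ is feasible for~\eqref{eq:minimax_starting_problem} with cost at most $P^\star$. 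Since $P^\star$ is the optimal value of~\eqref{eq:minimax_starting_problem}, the limit point must be optimal.

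The main obstacle I anticipate is the first identification step: the relation $\sum_i \srho{i}(t+1)=\eta(\slambda(t))$ relies on reading Lemma~\ref{lem:dual_minmax_equivalence} in the ``reverse'' direction, namely using that the primal optimum of~\eqref{eq:alg_minimization} equals the outer saddle value of~\eqref{eq:maxmin} by strong duality of the inner linear program. Care is also needed to handle the fact that~\eqref{eq:minimax_starting_problem} is convex but not strictly convex, so one cannot hope for convergence of the whole primal sequence; the argument above only claims that \emph{every} limit point is optimal, which is consistent with the theorem statement and is obtained without resorting to the averaging schemes of~\cite{nedic2009approximate} thanks to the primal-feasibility structure induced by the local constraints of~\eqref{eq:alg_minimization}.
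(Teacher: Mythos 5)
Your proof is correct and follows essentially the route the paper intends: the identity $\sum_i \srho{i}(t+1)=\eta(\{\slambda{ij}(t)\})$ combined with Lemma~\ref{lem:dual_subgradient_correcteness} is exactly the argument sketched in Remark~\ref{rem:sum_rho_P}, and your second step (summing the local constraints of~\eqref{eq:alg_minimization} so that the $\slambda{ij}-\slambda{ji}$ terms cancel over the undirected edges, then passing to the limit along a convergent subsequence) is the natural completion that explains why primal feasibility and optimality of limit points are obtained without averaging. No gaps noted.
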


\begin{remark}
  From condition~\eqref{eq:minmax} it can be shown that
  each $\srho{i}(t)$ is equal to $\eta_i( \{\slambda{ij},\slambda{ji}\}_{j\in\nbrs_i} )$ for all $ t \ge 0$. Since the optimal cost of \eqref{eq:dual_dual} is equal
  to the optimal primal cost $P^\star$, then we have that
  $\lim_{t\to\infty}\sum_i \rho^i(t) = \lim_{t\to\infty}\sum_i \eta_i(t) =
  P^\star$.  \oprocend
\label{rem:sum_rho_P}
\end{remark}

%%%%%%%%%%%%%%%%%%%%%%%%%%%%%%%%%%%%%%%%%%%%%%%%%%%%%%%%%%%%%%%%%%%%%%

\section{Numerical Simulations}
\label{sec:simulations}
In this section we propose a numerical example in which we apply the proposed
method to a network of Thermostatically Controlled Loads (TCLs) (such as air
conditioners, heat pumps, electric water heaters), \cite{Alizadeh2015reduced}.
% which minimize the peak electric energy consumption of a set of electric water heaters. 
% There is an extensive literature on the modeling of domestic electric water
% heaters (DEWH) (\cite{paull2010novel}) and, in general, Thermostatically
% Controlled Loads (TCLs), \cite{Alizadeh2015reduced}.

The dynamical model of the $i$-th device is given by 
% consists of a single state variable,
% the temperature $0\leq T_i(\tau)\in \real$, which evolves according to the dynamics
\begin{equation}
  \label{eq:agent_model}
  \dot{T}^{i}(\tau)= -\alpha \left(T^{i}(\tau)-T^{i}_{out}(\tau)\right)+ Q\sx{i}(\tau),
\end{equation}
where $T_i(\tau)\geq0$ is the temperature, $\alpha > 0$ is a parameter depending
on geometric and thermal characteristics, $T^{i}_{out}(\tau)$ is the air
temperature outside the device, $\sx{i}(\tau)\in \left[0,1\right]$ is the
control input, and $Q>0$ is a scaling factor.
%  which
% may vary in the interval $\left[0,1\right]$ that represents the throttle level
% of the heater, $0$ when switched off and $1$ at full power.
% In this example we simplify the problem by considering devices that can
% be regulated to an arbitrary set point in the interval.
%
%In most electric water heaters the heating element can be set only to on and off states, in this paper we consider the case in which the power injected by the heating element can be modulated by pulse-width modulation of its on-off states to achieve the desired effective power level or the input current in the heating element can be regulated to an arbitrary set point in the interval.
%
%Thus, given the water temperature $T(\tau)$ at time $\tau$, if we consider an interval of time in which the heating is constant ($x(t)=x$), a solution to system \eqref{eq:agent_model} is
%%
%\begin{equation}
%  \label{eq:solution}
%  T(t)= T(\tau)e^{-\alpha (t-\tau)}+\left(1-e^{-\alpha (t-\tau)}\right)\left(-T_{out} +\frac{k}{\alpha} x\right).
%\end{equation}

% A first constraint for the control input is to keep the temperature $T^i(\tau)$
% within a given range $T^i(\tau)\in \left[T_{min},T_{max}\right]$.

% A simple way to manipulate this model in a suitable form to numerically optimize the control
% input $\sx{i}(t)$ is to

We consider a discretized version of the system with constant input over the
sampling interval $\Delta\tau$, i.e., $x^{i}(\tau)=\sx{i}_{\slotIndex}$ for
$\tau\in \left[\slotIndex \Delta\tau, (\slotIndex+1) \Delta\tau \right)$, and
sampled state $T^{i}_{\slotIndex}$,
%
% assume the inputs to be constant over the sampling interval $\Delta\tau$,
% i.e., $x^{i}(\tau)=\sx{i}_{\slotIndex}$ for $\tau\in \left[\slotIndex \Delta\tau, (\slotIndex+1) \Delta\tau \right)$
% with $\slotIndex =1,\ldots,\slotUB$. The discrete-time version of system
% \eqref{eq:agent_model} is
%
\begin{equation}
  T^{i}_{\slotIndex+1}= T^{i}_{\slotIndex}e^{-\alpha \Delta\tau}+\left(1-e^{-\alpha \Delta\tau}\right)\left(\frac{Q}{\alpha} \sx{i}_{\slotIndex}-T^{i}_{out,\slotIndex}\right).
\label{eq:agent_model_discrete_time}
\end{equation}
%

% The aggregate power consumption of the set of $N$ devices in the generic time slot $[ \slotIndex dt, (\slotIndex+1) dt ]$ is
% $\sum_{i=1}^N g_{i \slotIndex} ( \sx{i}_\slotIndex )$,
% %
% %\begin{equation*}
% %  f_\slotIndex(x) = \sum_{i=1}^N g_{i \slotIndex} \left(\sx{i}_\slotIndex\right),
% %\end{equation*}
% where $g_{i \slotIndex} ( \cdot )$ represents the power consumption of the device for the given throttle level
% $\sx{i}_\slotIndex$ \IN{and is not known by any agent}. 

We assume that the power consumption $g_{i \slotIndex} ( \sx{i}_\slotIndex)$ of the $i$-th
device in the $\slotIndex$-th slot $[ \slotIndex \Delta\tau, (\slotIndex+1) \Delta\tau ]$ is directly proportional
to $\sx{i}_\slotIndex$.  For the sake of simplicity we consider
$g_{i \slotIndex} (\sx{i}_\slotIndex ) = \sx{i}_\slotIndex$ in the numerical example proposed in
this section.
Thus, optimization problem \eqref{eq:minimax_starting_problem} for this scenario is
% objective of the cooperation among the set
% of devices is to minimize their peak power consumption over a finite
% time horizon, i.e.,
%
\begin{align} \label{eq:simulated_problem}
\begin{split}
  \min_{\sx{1}, \ldots, \sx{N}} \: & \: \max_{\slotIndex \in \until{\slotUB}} \sum_{i=1}^N \sx{i}_\slotIndex
  \\
  \subj \: & \: \sx{i} \in X_i , \quad i\in\until{N}
\end{split}
\end{align}
where
$X_i := \{\sx{i} \in \real^\slotUB \mid A_i \sx{i} \preceq b_i \text{ and }
\sx{i}\in[0,1]^\slotUB \}$,
with $A_i$ and $b_i$ obtained by enforcing the dynamics
constraints \eqref{eq:agent_model_discrete_time} and temperature constraints
$T^i_\slotIndex \in \left[T_{min}, T_{max}\right]$.

%To construct $A_i$ and $b_i$, let us denote $\hat{A}=e^{-\alpha \Delta\tau}$ and
%$\hat{B}=1-e^{-\alpha \Delta\tau}$. We can compute the trajectory of $T^{i}_\slotIndex$ as
%a function of $\sx{i}_{\slotIndex}$, $T^{i}_{out,\slotIndex}$ and $T^{i}_0$ as follows. Let
%$\bar{T}^{i},\bar{T}^{i}_{out}\in \real^\slotUB$ be vectors whose $\slotIndex$-th element
%corresponds respectively to $T^{i}_\slotIndex$ and $T^{i}_{out,\slotIndex}$. Then, based
%on~\eqref{eq:agent_model_discrete_time} it holds
%%
%\begin{align*}
%\setlength{\arraycolsep}{3pt}
%  \bar{T}^i \! =\!\!
%  \underbrace{
%  \begin{bmatrix}
%    \hat{B} &  0 & \ldots & 0 \\
%    \hat{A}\hat{B} & \hat{B} & \ldots & 0 \\
%    \vdots \\
%    \hat{A}^{\slotUB}\hat{B}  & \hat{A}^{\slotUB-1}\hat{B}  & \ldots & \hat{B}
%  \end{bmatrix}
%  }_{F} \!\!
%  \Big(\!\! -\bar{T}^i_{out}+\frac{Q}{\alpha} \sx{i} \Big)
%  \!+\!
%  \underbrace{
%  \begin{bmatrix}
%    \hat{A} \\
%    \hat{A}^2 \\
%    \cdots \\
%    \hat{A}^\slotUB  \\
%  \end{bmatrix}
%  }_{G} \! T^{i}_0.
%\end{align*}
%
%Thus, the matrix $A_i$ and the vector $b_i$ turn out to be
%\begin{align*}
%  A_i =
%  \begin{bmatrix}
%    \frac{k}{\alpha}F\\
%    -\frac{k}{\alpha} F
%  \end{bmatrix},
%  \quad
%  b_i =
%  \begin{bmatrix}
%    T_{max} \1 -G T^i_0 + F \bar{T}^i_{out} \\
%    -T_{min} \1 +G T^i_0 -F \bar{T}^i_{out}
%  \end{bmatrix}.
%\end{align*}
%%
%% Moreover, the control input $x^i(\slotIndex)$ in our model is limited to the interval 
%% $[0,1]$, therefore also the constraint $x^i\in  [0,1]^\slotUB$ needs to be taken into account.
%%

In the proposed numerical example we consider $N=15$ agents communicating
according to an undirected connected Erd\H{o}s-R\'enyi random graph $\GG$ with
parameter $0.2$. We consider a horizon of $\slotUB=50$.  Finally, a diminishing
step-size sequence $\gamma(t) = (\frac{1}{t})^{0.8}$ at iteration $t$, which
satisfies Assumption~\ref{ass:step-size}, is used.

% CONVERGENCE
In Figure~\ref{fig:rho} we show the evolution at each algorithm iteration $t$ of
the local objective functions $\srho{i}(t)$, $i\in\until{N}$, (solid lines)
which converge to stationary values.  We also plot their sum
$\sum_{i=1}^N\srho{i}(t)$ (dotted line) that asymptotically converges to the
centralized optimal cost $P^\star$ of problem~\eqref{eq:simulated_problem}
(see Remark~\ref{rem:sum_rho_P}).

\begin{figure}[!htbp]
\centering
  \includegraphics[scale=0.85]{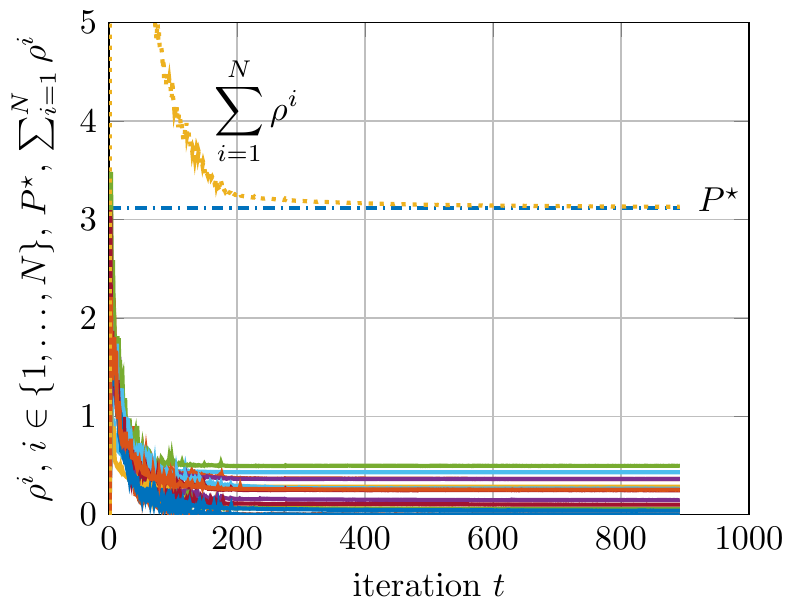}
  \caption{
    Evolution of $\srho{i}$, $i\in\until{N}$, (solid lines), their sum
    $\sum_{i=1}^N\srho{i}$ (dotted line), and (centralized) optimal cost
    $P^\star$ (dash-dotted line).
    }
  \label{fig:rho}
\end{figure}

% BEHAVIOR
In Figure~\ref{fig:xx} it is shown the profile of an optimal consumption of the devices,
i.e., $\sum_{i=1}^N {\sx{i}_\slotIndex}^\star$, over the
horizon $\slotIndex=1,\ldots,\slotUB$. It can be seen that the proposed method effectively
shaves off the peak power demand.
In the same figure it also shown an optimal consumption strategy ${\sx{i}}^\star$ that each single device locally computes.
%it can be seen that the network of water heaters have de-synchronized power demand compatibly with the constraint of keeping their
%internal temperature within the given interval $\left[T_{min}, T_{max}\right]$ despite a time-varying external temperature.
\begin{figure}[!htbp]
\centering
  \includegraphics[scale=0.85]{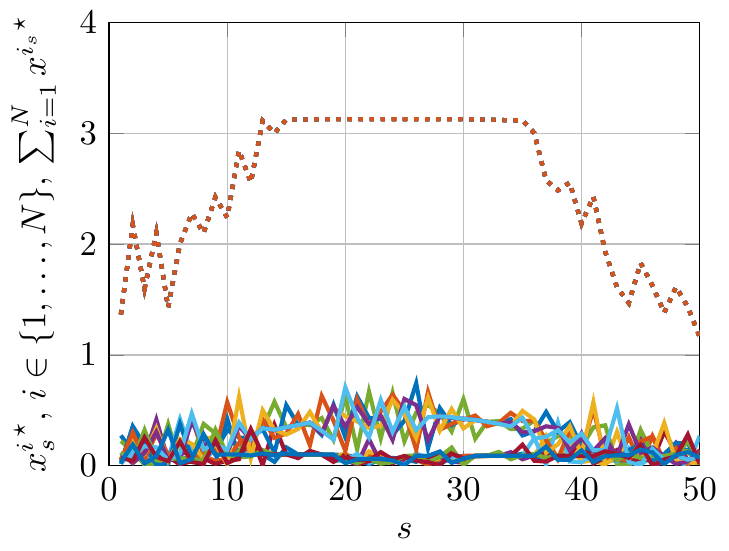}
  \caption{
    Profile of optimal solutions ${\sx{i}}^\star$ (solid lines), and $\sum_{i=1}^N{\sx{i}_\slotIndex}^\star$ 
    (dotted line) on the optimization horizon $\until{\slotUB}$.
    }
  \label{fig:xx}
\end{figure}

% RATE OF CONVERGENCE
Finally, in Figure~\ref{fig:cost} it is shown the convergence rate of the
distributed algorithm, i.e., the difference between the centralized optimal
cost  $P^\star$ and the sum of the local costs $\sum_{i=1}^N \srho{i} (t)$,
in logarithmic scale.  It can be seen that the proposed algorithm converges to
the optimal cost with sublinear rate $O(1/\sqrt{t})$ as expected for a
subgradient method.
\begin{figure}[!htbp]
\centering
  \includegraphics[scale=0.85]{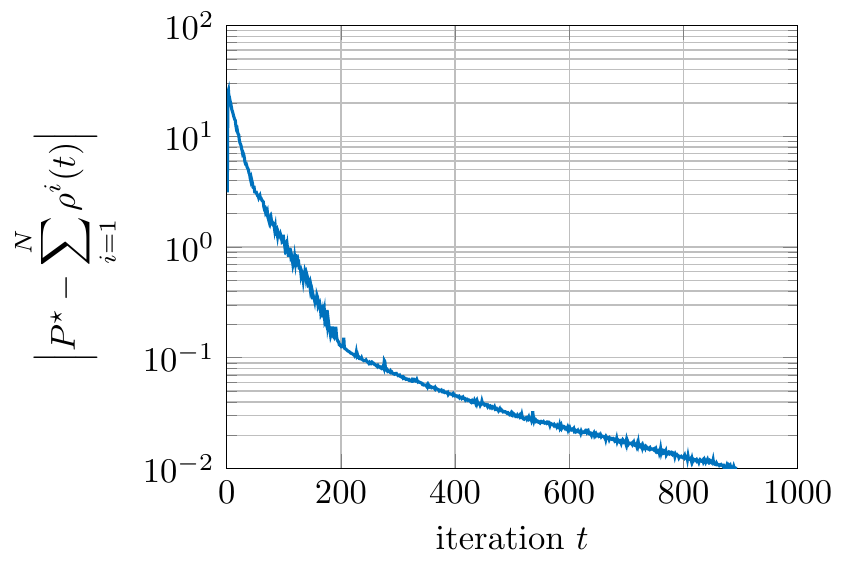}
  \caption{
    Evolution of the cost error, in logarithmic scale.
    }
  \label{fig:cost}
\end{figure}

%%%%%%%%%%%%%%%%%%%%%%%%%%%%%%%%%%%%%%%%%%%%%%%%%%%%%%%%%%%%%%%%%%%%%%

\section{Conclusions} %
\label{sec:conclusions} %
In this paper we have introduced a novel distributed min-max optimization
framework motivated by peak minimization problems in Demand Side
Management. Standard distributed optimization algorithms cannot be applied to
this problem set-up due to a highly nontrivial coupling in the objective
function and in the constraints.
We proposed a distributed algorithm based on the combination of duality methods
and properties from min-max optimization.
We proved the correctness of the proposed algorithm and corroborated the
theoretical results with a numerical example.

\begin{small}
  \bibliographystyle{IEEEtran}
  \bibliography{distributed_min-max}
\end{small}

\end{document}